\newtheorem{proposition}{Proposition}
\newcounter{mytempeqcounter}
\newcommand{\qa}{{\bf a}}
\newcommand{\qb}{{\bf b}}
\newcommand{\qg}{{\bf g}}
\newcommand{\qq}{{\bf q}}
\newcommand{\qr}{{\bf r}}
\newcommand{\qw}{{\bf w}}
\newcommand{\qx}{{\bf x}}
\newcommand{\qy}{{\bf y}}
\newcommand{\qZ}{{\bf Z}}
\newcommand{\Sn}{\sigma_n^2}
\newcommand{\Ntx}{N}
\newcommand{\Nrx}{N}
\newcommand{\dl}{\mathtt{dl}}
\newcommand{\ul}{\mathtt{ul}}
\newcommand{\hgmkpd}{\hat{\qg}_{mk'}^{\dl}}
\newcommand{\tgmkd}{\tilde{\qg}_{mk}^{\dl}}
\newcommand{\gmkd}{\qg_{mk}^{\dl}}
\newcommand{\hgmkd}{\hat{\qg}_{mk}^{\dl}}
\newcommand{\hgmlu}{\hat{\qg}_{m\ell}^{\ul}}
\newcommand{\gmlu}{\qg_{m\ell}^{\ul}}
\newcommand{\gamdmk}{\gamma_{mk}^{\dl}}
\newcommand{\gamdmkp}{\gamma_{mk'}^{\dl}}
\newcommand{\gamuml}{\gamma_{m\ell}^{\ul}}
\newcommand{\vsl}{\varsigma_\ell}
\newcommand{\vsml}{\varsigma_{m\ell}}
\newcommand{\vsmq}{\varsigma_{mq}}
\newcommand{\betamkd}{\beta_{mk}^{\dl}}
\newcommand{\betatilmi}{\tilde{\beta}_{mi}}
\newcommand{\betakldu}{\beta_{k\ell}^{\mathtt{du}}}
\newcommand{\betamlu}{\beta_{m\ell}^{\ul}}
\title{Virtually Full-duplex Cell-Free Massive MIMO with Access Point Mode Assignment}
\author{Mohammadali Mohammadi$^\dag$, Tung T. Vu$^\dag$, Behnaz Naderi Beni$^\ddag$, Hien Quoc Ngo$^\dag$, and  Michail Matthaiou$^\dag$\\
{$^\dag$Centre for Wireless Innovation (CWI), Queen's University Belfast, BT3 9DT, Belfast, U.K.}\\
{$^\ddag$Faculty of  Engineering, Shahrekord University, Shahrekord 115, Iran}\\
{Email: \{m.mohammadi, t.vu, hien.ngo, m.matthaiou\}@qub.ac.uk,  b.naderi@stu.sku.ac.ir}
}
\begin{document}
\bstctlcite{IEEEexample:BSTcontrol}
\maketitle
\begin{abstract}
We consider a cell-free massive  multiple-input multiple-output (MIMO) network utilizing a virtually full-duplex (vFD) mode, where access points (APs) with a downlink (DL) mode and those with an uplink (UL) mode simultaneously serve DL and UL users (UEs). In order to maximize the sum spectral efficiency (SE) of the DL and UL transmissions, we formulate a mixed-integer optimization problem to jointly design the AP mode assignment and power control. This problem is subject to minimum per-UE SE requirements, per-AP power control, and per-UL UE power constraints. By employing the successive convex approximation technique, we propose an algorithm to obtain a stationary solution of the formulated problem. Numerical results show that the proposed vFD approach can provide a sum SE that is $2.5$ and $1.5$ times larger than the traditional half-duplex and heuristic baseline schemes, respectively, in terms of $95\%$-likely sum SE.

\end{abstract}

\vspace{-0em}
\section{Introduction}
Cell-free massive MIMO  has been recently put forward as an alternative for colocated massive MIMO systems, to overcome the inter-cell interference problem, while providing handover-free and balanced quality-of-services  for cell-edge UEs~\cite{Hien:cellfree}. In cell-free massive MIMO, a large number of distributed APs, that are connected to one or several central processing units (CPUs), coherently serve all UEs in the network. Cell-free massive MIMO inherits not only the favorable propagation and channel hardening from colocated massive MIMO systems, but also the macro-diversity gain from the distributed systems. Thus, it
achieves very high SE and connectivity. This has motivated the use of simple linear processing techniques in cell-free massive MIMO~\cite{Hien:cellfree}.

By relying on time-division duplex (TDD) transmission mode,  cell-free massive MIMO offers very simple control of UL and DL traffic at UEs. However, this cannot fully cater the heterogeneous UL-DL data demands from the UEs and will eventually entail high latency and network bandwidth requirements. In other words, a UE with UL (DL) data demand has to wait for a slot where APs are operating in the UL (DL) mode in order to complete its transmission.  These challenges may undermine the development of cell-free massive MIMO for wireless services with stringent latency requirements and innovative Internet-of-Things applications. Fortunately, due to the recent advances in self-interference (SI) suppression techniques~\cite{Zhang2015}, such crucial challenges have been alleviated by enabling FD capabilities at the APs to simultaneously support UL and DL data demands~\cite{tung19ICC,Nguyen:JSAC:2020,Datta:JOP:2022}. However, the system performance is still limited due to the cross-link interference, i.e., the interference received by the receiving antennas of one FD AP from the transmitting antennas of another FD AP, as well as the interference received by a DL user from another UL user. To deal with the cross-link interference, recently the concept of cell-free with network-assisted FD (NAFD) has  emerged~\cite{Wang:TCOM.2020,Xinjiang:TWC:2021}, where APs can operate in hybrid-duplex, flexible duplex, and FD modes.

For a cell-free massive MIMO system with NAFD, SI suppression at the FD APs is still an issue that makes the real application of the system challenging. Specifically, SI suppression at the FD APs entails power-hungry hardware as well as high digital signal processing cost~\cite{Zhang2015}.
In a recent work~\cite{chowdhury2021can}, a dynamic TDD scheme has been introduced in cell-free massive MIMO systems, where an in-band FD mode is virtually realized with HD APs without the SI. In particular, UL UEs are served by APs operating in an UL reception mode at the same time when DL UEs are served by another group of APs operating in a DL transmission mode. However, in \cite{chowdhury2021can}, the AP modes are scheduled by a greedy algorithm, which is not optimal. Moreover, the SE requirements for UL and DL UEs were ignored, and the impact of power control has not been investigated. Therefore, in this work, we propose a virtually FD (vFD) cell-free massive MIMO system that is similar to that in \cite{chowdhury2021can} but with  the AP mode assignment and power control jointly designed. The main contributions of our work are as follows:
\vspace{-1mm}
\begin{itemize}
	\item We formulate a problem to optimize the transmission mode assignments (i.e., UL or DL) at the HD APs and power control for maximizing the achievable sum SE in the vFD cell-free massive MIMO system. We consider the SE requirements of both UL and DL UEs, per-AP  power constraints, and per-UL UE power constraints as the optimization problem constraints.
	\item An iterative algorithm is proposed to solve the formulated mixed-integer non-convex problem. In particular, to deal with the binary mode assignment variables, we recast the problem into an approximated problem with continuous variables only. Then, we solve the problem by adopting successive convex approximation techniques.
	\item Numerical results show that our vFD approach significantly improves the SE of the system compared to the traditional HD and heuristic approaches.
\end{itemize}

\vspace{-1mm}
\textit{Notation:} We use bold upper case letters to denote matrices, and lower case letters to denote vectors. The superscripts $(\cdot)^*$, $(\cdot)^T$ and $(\cdot)^\dag$ stand for the conjugate, transpose, and conjugate-transpose, respectively.   A zero mean circular symmetric complex Gaussian distribution having variance $\Sn$ is denoted by $\mathcal{CN}(0,\Sn)$. Finally, $\mathbb{E}\{\cdot\}$ denotes the statistical expectation.

\vspace{-2mm}
\section{System model}
\vspace{-1mm}
We consider a vFD cell-free massive MIMO system, where $M$ APs serve $K$ single-antenna HD UEs in the same frequency band. Here, there are $K_u$ UL UEs and $K_d$ DL UEs, where $K=K_u+K_d$. Each AP is connected to the CPU via a high-capacity backhaul link. This system model is similar to that in~\cite{Hien:cellfree} except that we allow the DL and UL transmissions to be executed simultaneously and in the same frequency band. Each coherence block includes two phases: UL training for channel estimation and UL-and-DL payload data transmission.

Each AP is equipped with $N$ antennas, $N$ transmit radio frequency (RF) chains, and $N$ receive RF chains. The APs are able to switch between the UL or DL modes. The decision of which mode is assigned to each AP is optimized to achieve the highest sum SE of the network as discussed in Section~\ref{sec:opt}. Note that the AP mode selection is performed on the large-scale fading time scale which changes very slowly with time. For each AP $m$, we define the mode assignment variables as follows:
\vspace{-0.5em}
\begin{align}
\label{a}
a_{m} \triangleq
\begin{cases}
  1, & \text{if AP $m$ operates in the DL mode,}\\
  0, & \mbox{otherwise}.
\end{cases}, \forall m,
\\
\label{b}
b_{m} \triangleq
\begin{cases}
  1, & \text{if AP $m$ operates in the UL mode,}\\
  0, & \mbox{otherwise}.
\end{cases}, \forall m,
\end{align}
Here, we have
\vspace{-0.7em}
\begin{align}
\label{sumab}
    a_m + b_m = 1, \forall m,
\end{align}
to guarantee that the AP $m$ only operates in either the DL or UL mode.

\textbf{Uplink Channel Estimation:}
Denote by $\gmlu\in\mathbb{C}^{\Nrx \times 1}$ and $\gmkd\in\mathbb{C}^{\Ntx \times 1}$ the channel
vectors from UL UE $\ell$ and DL UE $k$ to AP $m$, respectively. Each of these channels, e.g., $\gmkd$, is modeled
in the same manner as $\gmkd=\left(\betamkd\right)^{1/2}\tgmkd$, where $\betamkd$ is the large-scale fading coefficient and $\tgmkd\in\mathbb{C}^{\Ntx \times 1}$ is the small-scale fading vector whose elements are independent and identically distributed (i.i.d.) $\mathcal{CN} (0, 1)$ random variables (RVs). In each coherence block of length $\tau_c$, each UE $\ell$ ($k$) sends the pilot sequence
of length $\tau_t$ to the APs. Assume that the pilot sequences are pairwisely orthogonal, which requires that $\tau_t\geq K_u+K_d$. At AP $m$, $\gmkd$  and $\gmlu$ are estimated by using the received pilot signals together with the minimum mean-square error (MMSE) estimation technique. By following~\cite{Hien:cellfree}, the $n$-th components of the MMSE estimates of $\gmkd$  and $\gmlu$ are as $[\hgmkd]_n$ and $[\hgmlu]_n$, respectively, which are distributed as $\mathcal{CN}(0,\gamdmk)$, and $\mathcal{CN}(0,\gamuml)$, respectively, with
\vspace{-0.3em}
\begin{align}
\gamdmk&\triangleq \mathbb{E}\left\{|[\hat{\qg}_{mk}^d]_n|^2\right\}
=
\frac{{\tau_t\rho_t}(\betamkd)^2}
{\tau_t\rho_t
\betamkd+1},
\\
\gamuml&\triangleq \mathbb{E}\left\{|[\hgmlu]_n|^2\right\}
=
\frac{{\tau_t\rho_t}(\betamlu)^2}
{\tau_t\rho_t
\betamlu
+1},
\end{align}
where $\rho_t$ is the normalized transmit power of each pilot symbol.  
We note that the AP mode selection does not affect the channel estimation.

\textbf{Downlink payload data transmission:}
Let $s_k^\dl\sim\mathcal{CN}(0,1)$ denote the intended symbol for DL UE $k$. 
Using the channel estimates from the UL training phase, the APs use maximum-ratio beamforming to transmit signals to $K_d$ DL UEs. The maximum-ratio beamforming is computationally simple for the APs to process locally~\cite{Hien:cellfree}.
The transmitted signal $\qx_{m}^{d}\in\mathbb{C}^{\Ntx\times 1}$ from AP $m$ in the DL mode is expressed by $\qx_{m}^{\dl}
= \sqrt{\rho_d}\sum_{k \in \mathcal{K}_d} \sqrt{\eta_{mk}} \left(\hgmkd\right)^*
s_{k}^{\dl},$
where $\rho_d$ is the maximum normalized transmit power at each AP and $\eta_{mk}$ is a power control coefficient. Here, we enforce
\vspace{-0.0em}
\begin{align}
\label{etaa:relation}
    (\eta_{mk} = 0, \forall k,\,\, \text{if}\,\, a_m = 0),\quad \forall m,
\end{align}
to ensure that if AP $m$ does not operate in the DL mode, all the transmit powers $\rho_d \eta_{mk}, \forall k$, at AP $m$ are zero.
Note that AP $m$ is required to meet the average normalized
power constraint, i.e., $\mathbb{E}\left\{\|\qx_{m}^{\dl}\|^2\right\}\leq \rho_d$, which can also be
expressed as the following per-AP power constraint~\cite{Hien:cellfree}:
\vspace{-0.1em}
\begin{align}
\label{DL:power:cons}
\sum_{k\in\mathcal{K}_{d}} \gamdmk \eta_{mk} \leq \frac{1}{\Ntx}.
\end{align}

\textbf{Uplink payload data transmission:}
The transmit signal  from   UL UE $\ell$ is represented by $ x_{\ell}^\ul  =\sqrt{\rho_u \tilde{\varsigma}_\ell} s_{\ell}^{\ul}$, where $s_{\ell}^{\ul}$ ($\mathbb{E}\left\{|s_{\ell}^\ul|^2\right\}=1$) and $\rho_u$ denote  respectively the transmitted symbol
and  the normalized transmit power at each UL UE. Moreover, $\tilde{\varsigma}_{\ell}$ is the transmit power control coefficient at UL UE $\ell$ with
\vspace{-0.3em}
\begin{align}
\label{UL:power:cons}
    0\leq \tilde{\varsigma}_{\ell} \leq1.
\end{align}

\textbf{Downlink SE:}
Denote by $h_{k\ell}$ the channel gain between the UL UE $\ell$ to the DL UE $k$. It can be modelled as
$h_{k\ell}=(\betakldu)^{1/2}\tilde{h}_{k\ell}$, where $\tilde{h}_{k\ell}$ is a $\mathcal{CN}(0,1)$ RV. We let
\vspace{-0.0em}
\begin{align}
\label{theta:eta:relation}
    \theta_{mk}^2 \triangleq \eta_{mk},
\end{align}
where $\theta_{mk}\geq 0$ is an additional variable. Then, the received signal at DL UE $k$ is given by
\begin{align}~\label{eq:ykdl}
y_k^{\dl}
&=
\sqrt{\rho_d}\sum_{m \in \mathcal{M}} \theta_{mk}
\left(\gmkd\right)^T\left(\hgmkd\right)^*
s_{k}^{\dl}
\nonumber\\
&\hspace{2em}+
\sqrt{\rho_d}
\sum_{m \in \mathcal{M}}
\sum_{k'\in\mathcal{K}_d \setminus k} \theta_{mk'}
\left(\gmkd\right)^T\left(\hgmkpd\right)^*
s_{k'}^{\dl}\nonumber\\
&\hspace{2em}+
\sum_{\ell\in \mathcal{K}_{u}}h_{k\ell}\sqrt{\rho_u \tilde{\varsigma}_\ell} s_{\ell}^{\ul}+w_{k}^{\dl},
\end{align}
where $w_{k}^{\dl}\sim\mathcal{CN}(0,1)$ is the AWGN at  DL UE $k$.
By invoking the use-and-then-forget capacity-bounding technique~\cite{tung19ICC, Hien:cellfree}, a closed-form expression for the achievable downlink SE can be obtained as in~\eqref{eq:DL:SE} at the top of the next page, where $\qa \triangleq \{a_m\}$, $\boldsymbol{\theta}\triangleq \{\theta_{mk}\}$, and $\tilde{\boldsymbol{\varsigma}} \triangleq \{\tilde{\varsigma}_{\ell}\}, \forall m,k,\ell$.\\
\begin{figure*}
	\begin{align}~\label{eq:DL:SE}
\mathcal{S}_{\dl,k} (\qa, \boldsymbol \theta, \tilde{\boldsymbol{\varsigma}}) =  \frac{\tau_c-\tau_t}{\tau_c}
\log_2 \left(1 +  \frac{\Ntx^2 \rho_{d}\left(
	\sum_{{m\in\mathcal{M}}}
	\theta_{mk} \gamdmk\right)^2}
{\rho_{d}\Ntx
	\sum_{k'\in\mathcal{K}_{d}}\sum_{m\in\mathcal{M}}
	\theta_{mk'}^2 \betamkd\gamdmkp
	+
	\rho_u\sum_{\ell\in\mathcal{K}_u}  \tilde{\vsl}\betakldu + 1}\right),
	\end{align}
	\hrulefill
	\vspace{-4mm}
\end{figure*}


\vspace{-3mm}
\textbf{Uplink SE:}
The interference links among the APs are modeled as Rayleigh fading channels. Let $\qZ_{mi}\in \mathbb{C}^{\Nrx\times\Ntx}$ be the channel matrix from AP $m$ to AP $i$, whose elements are  i.i.d. $\mathcal{CN}(0,\beta_{mi})$ RVs. The received signal $\qy_{m}^{\ul}\in\mathbb{C}^{\Nrx \times 1}$ at AP $m$ in the UL mode can be written as
\vspace{-0.0em}
\setcounter{equation}{11}
\begin{align}\label{eq:ymul}
\qy_{m}^{\ul}
&=
\sqrt{\rho_u}\sum_{\ell\in \mathcal{K}_{u}}\varsigma_{m\ell}\qg_{m\ell}^{\ul} s_{\ell}^{\ul}
\\
&
+
\sqrt{\rho_d}\sum_{i\in\mathcal{M}\setminus m}\sum_{k\in \mathcal{K}_d}
\sqrt{b_m \eta_{ik}}
\qZ_{mi}
(\hat{\qg}_{ik}^\dl)^*s_k^\dl
+\sqrt{b_m}\qw_{m}^{\ul},\nonumber
\end{align}
where $\varsigma_{m\ell}$ is the effective UL power control coefficients of UL UE $\ell$ at AP $m$ with
\vspace{-0.2em}
\begin{align}
    \label{varsigmaml}
    \varsigma_{m\ell}^2 &\triangleq b_m \tilde{\varsigma}_{\ell}, \quad \forall m,\ell,
\end{align}
and $\qw_{m}^{\ul}$ is the $\mathcal{CN}(0,1)$ AWGN vector. Eq. \eqref{eq:ymul} captures the fact that if AP $m$ does not operate in the UL mode, i.e., $b_m=0$, it does not receive any signal.

To detect $s_{\ell}^{\ul}$, the received signal at AP $m$  in~\eqref{eq:ymul} will be first multiplied with the Hermitian of the (locally obtained) channel estimation vector $\hgmlu$. The resulting $(\hgmlu)^\dag\qy_{m}^{\ul}$ is then forwarded to the CPU for signal detection. The aggregated received signal of UE $\ell$ at the CPU can be written as $\qr_{\ell}^{\ul}=\sum_{{m =1}}^{M}(\hgmlu)^\dag\qy_{m}^{\ul},$~\cite{tung19ICC}.
Using the use-and-then-forget capacity-bounding technique~\cite{Hien:cellfree}, we obtain the UL achievable SE of UL UE $\ell$ as in~\eqref{eq:UL:SE} at the top of the next page,  where $\qb \triangleq \{b_m\}$, $\boldsymbol \varsigma \triangleq \{\varsigma_{m\ell}\}$, and $\boldsymbol \eta \triangleq \{\eta_{mk}\}, \forall m, k, \ell$.
\begin{figure*}
\begin{align}
\label{eq:UL:SE}
	\mathcal{S}_{\ul,\ell} (\qb, \boldsymbol{\varsigma}, \boldsymbol{\eta})
	\!=\! \frac{\tau_c\!-\!\tau_t}{\tau_c}\!\log_2\!\!
	\left(\!1\!+\!\frac{\Nrx \rho_{u}\left(\sum_{\substack{m\in\mathcal{M}}}\vsml\gamuml\right)^2}
	{\!\!\rho_{u}
		\! \sum_{\substack{m\in\mathcal{M}}}
		\! \sum_{q\in\mathcal{K}_u}
		\!\!
		\vsmq^2
		\betamlu
		\gamma_{mq}^{\ul}
		\!+\!
		\rho_{d}\Ntx
		\!\sum_{\substack{m\in\mathcal{M}}}
		\!\sum_{\substack{i\in\mathcal{M}}}
		\!\sum_{k\in\mathcal{K}_d}
		\!
		b_m \eta_{ik}\betatilmi\gamuml\gamma_{ik}^{\dl}
		\!+\!
		\!\sum_{\substack{m\in\mathcal{M}}}
		\!
		b_m\gamuml}\!\right)\!\!,
\end{align}
\hrulefill
\vspace{-4mm}
\end{figure*}


\vspace{-1mm}
\section{Spectral Efficiency Maximization:\\ Problem Formulation and Solution}\label{sec:opt}
\vspace{-1mm}
In this section, we seek to design the UL and DL mode assignment vectors $\qa$ and $\qb$, and to allocate DL and UL power control coefficients $\boldsymbol \varsigma $ and $\boldsymbol \eta $, to maximize the sum SE, under the constraints on per-UE SE, transmit power at each AP, and transmit power at each UL UE. More precisely,  we formulate an optimization problem as follows:
\vspace{-0.1em}
\begin{subequations}\label{P:SE}
\begin{align}
\underset{\qx}{\max}\,\, &
\sum_{\ell\in\mathcal{K}_u} \mathcal{S}_{\ul,\ell} (\qb, \boldsymbol \varsigma, \boldsymbol \eta)   +  \sum_{k\in\mathcal{K}_d}\mathcal{S}_{\dl,k} (\qa, \boldsymbol \theta, \tilde{\boldsymbol{\varsigma}}) \\
\mathrm{s.t.} \,\,
\nonumber
& \eqref{a}-\eqref{sumab},
\eqref{etaa:relation}-
\eqref{theta:eta:relation}, \eqref{varsigmaml} \\
& \mathcal{S}_{\ul,\ell} (\qb, \boldsymbol \varsigma, \tilde{\boldsymbol{\eta}}) \geq \mathcal{S}_\ul^o,~\forall \ell\label{DL:QoS:cons}\\
&\mathcal{S}_{\dl,k} (\qa, \boldsymbol \eta, \tilde{\boldsymbol{\varsigma}}) \geq  \mathcal{S}_\dl^o,~\forall k,\label{UL:QoS:cons}
\end{align}
\end{subequations}
where $\qx\triangleq\{\qa, \qb, \boldsymbol \varsigma, \boldsymbol \theta, \boldsymbol\eta, \tilde{\boldsymbol{\varsigma}}\}$.

Let us introduce the additional variables $\{\tilde{\eta}_{imk}\}$ where
\begin{align}
    \label{tildeetaimk}
    &\tilde{\eta}_{imk} \geq b_m \eta_{ik}, \quad\forall i,m,k
    \\
    \label{tildeetaimk:2}
    &\tilde{\eta}_{imk} \leq \eta_{ik}, \quad\forall i,m,k.
\end{align}
Then, $\mathcal{S}_{\ul,\ell} (\qb, \!\boldsymbol \varsigma,\! \boldsymbol \eta)\!\! \geq\!\! \widehat{\mathcal{S}}_{\ul,\ell} (\qb, \boldsymbol \varsigma,  \tilde{\boldsymbol{\eta}}) \triangleq \!\frac{\tau_c\!-\!\tau_t}{\tau_c}\!\log_2
     \Big(\!1\!+\!\frac{\Psi_\ell^2}{\Phi_\ell}\!\Big), \forall \ell$, where $\Phi_{\ell}(\qb, \boldsymbol{\varsigma}, \tilde{\boldsymbol{\eta}}) \triangleq \rho_{u}
           \! \sum_{\substack{m\in\mathcal{M}}}
           \! \sum_{q\in\mathcal{K}_u}
                    \vsmq^2
                    \betamlu
                    \gamma_{mq}^{\ul}
\!+\!
\rho_{d}\Ntx
        \!\sum_{\substack{m\in\mathcal{M}}}
        \!\sum_{\substack{i\in\mathcal{M}}}
        \!\sum_{k\in\mathcal{K}_d}
            \tilde{\eta}_{imk} \betatilmi\gamuml\gamma_{ik}^{\dl}
\!+\!
    \!\sum_{\substack{m\in\mathcal{M}}}
                        b_m\gamuml$, $\Psi_{\ell} (\boldsymbol{\varsigma}) \triangleq \sqrt{\Nrx \rho_{u}}\sum_{\substack{m\in\mathcal{M}}}\vsml\gamuml$, and $\tilde{\boldsymbol \eta} \triangleq \{\tilde{\eta}_{imk}\}, \forall m,i,\ell,k$.
Problem \eqref{P:SE} is thus equivalent to
\vspace{-0mm}
\begin{subequations}\label{P:SE:equi}
\begin{align}
\underset{\qx,\qq_{\ul},\qq_{\dl}}{\min}\,\, &
- \sum_{\ell\in\mathcal{K}_u} q_{\ul,\ell}   -  \sum_{k\in\mathcal{K}_d} q_{\dl,k}  \\
\mathrm{s.t.} \,\,
\nonumber
& \eqref{a}-\eqref{sumab},
\eqref{etaa:relation}-\eqref{theta:eta:relation}, \eqref{varsigmaml}, \eqref{tildeetaimk}, \eqref{tildeetaimk:2} \\
& \widehat{\mathcal{S}}_{\ul,\ell} (\qb, \boldsymbol \varsigma,  \tilde{\boldsymbol{\eta}}) \geq q_{\ul,\ell}, \forall \ell  \label{UL:QoS:cons:1}\\
& q_{\ul,\ell} \geq \mathcal{S}_\ul^o, \forall \ell \label{UL:QoS:cons:2}\\
&\mathcal{S}_{\dl,k} (\qa, \boldsymbol \theta, \tilde{\boldsymbol{\varsigma}}) \geq q_{\dl,k}, \forall k \label{DL:QoS:cons:1} \\
& q_{\dl,k} \geq \mathcal{S}_\dl^o, \forall k, \label{DL:QoS:cons:2} \\
& 1 \geq \varsigma_{m\ell} \geq 0, \theta_{mk} \geq 0, \forall m,k,\ell, \label{theta:cons}
\end{align}
\end{subequations}
where $\widetilde{\qx}\triangleq \{\qx,\qq_{\ul}, \tilde{\boldsymbol \eta}, \qq_{\dl}\}$,  $\qq_{\ul} \triangleq \{q_{\ul,\ell}\}, \qq_{\dl}\triangleq\{q_{\dl,k}\}, \forall k,\ell$, are additional variables.


From the sense of \eqref{DL:power:cons}, we replace constraint \eqref{etaa:relation} by
\vspace{-0.1em}
\begin{align}
\label{etaa:relation:2}
N \gamdmk \eta_{mk} \leq a_{m}, \quad\forall m,k.
\end{align}
We also replace constraints \eqref{theta:eta:relation} and \eqref{varsigmaml} by
\begin{align}
    \label{theta:1}
    \theta_{mk}^2 &\leq \eta_{mk}, \quad\forall m,k
    \\
    \label{theta:2}
    \theta_{mk}^2 &\geq \eta_{mk}, \quad\forall m,k
    \\
    \label{varsigmaml:1}
    \varsigma_{m\ell}^2 & \leq b_m \tilde{\varsigma}_{\ell}, \quad\forall m,\ell
    \\
    \label{varsigmaml:2}
    \varsigma_{m\ell}^2 &\geq b_m \tilde{\varsigma}_{\ell}, \quad\forall m,\ell.
\end{align}
From \eqref{theta:1}
and \eqref{varsigmaml:1}, we notice that \eqref{theta:2}
and \eqref{varsigmaml:2} are respectively equivalent to
\vspace{-0.3em}
\begin{align}
    \label{C1}
    &C_1 (\boldsymbol{\theta}, \boldsymbol{\eta}) = \sum_{m\in\mathcal{M}} \sum_{k\in\mathcal{K}_d} (\eta_{mk} - \theta_{mk}^2) \leq 0.
    \\
    \label{C2}
    &C_2(\qb, \tilde{\boldsymbol{\varsigma}}, \boldsymbol{\varsigma}) = \sum_{m\in\mathcal{M}} \sum_{\ell \in \mathcal{K}_u} (b_m \tilde{\varsigma}_{\ell} - \varsigma_{m\ell}^2)  \leq 0
\end{align}

To handle the binary constraints \eqref{a} and \eqref{b}, we observe that $x\in\{0,1\}\Leftrightarrow x\in[0,1]\,\&\,x-x^2\leq0$ \cite{vu18TCOM}. Thus, \eqref{a} and \eqref{b} can be replaced by the following equivalent constraint:
\begin{align}
\label{C4}
& C_3(\qa, \qb) \triangleq \sum_{m\in\mathcal{M}} (a_{m}\!-\!a_{m}^2) + \sum_{m\in \mathcal{M}} (b_{m}\!-\!b_{m}^2) \leq 0
\\
\label{abrelax}
& 0 \leq a_{m} \leq 1,~0 \leq b_{m} \leq 1, \forall m.
\end{align}
Therefore, problem \eqref{P:SE:equi} can be written in a more tractable form as
\vspace{-0.5em}
\begin{align}\label{P:SE:equi:3}
\underset{\widetilde{\qx}\in \mathcal{F}}{\min}\,\, &
-\sum_{\ell\in\mathcal{K}_u} q_{\ul,\ell}   -  \sum_{k\in\mathcal{K}_d} q_{\dl,k}
\end{align}
where $\mathcal{F} \triangleq \{\eqref{sumab}, \eqref{DL:power:cons}, \eqref{UL:power:cons}, \eqref{tildeetaimk}, \eqref{tildeetaimk:2}, \eqref{UL:QoS:cons:1}\!-\!\eqref{theta:cons}, \eqref{etaa:relation:2}, \eqref{theta:1},\\ \eqref{varsigmaml:1},  \eqref{C1}-\eqref{abrelax}\}$ is a feasible set. Now, we consider the following problem
\vspace{-0.6em}
\begin{align}\label{P:SE:equi:relax}
\underset{\widetilde{\qx}\in \widetilde{\mathcal{F}}}{\min}\,\, &
\mathcal{L}(\widetilde{\qx})
\end{align}
where $\mathcal{L}(\widetilde{\qx})\triangleq -\sum_{\ell\in\mathcal{K}_u} q_{\ul,\ell}   -  \sum_{k\in\mathcal{K}_d} q_{\dl,k} + \lambda[\mu_1 C_1 (\boldsymbol{\theta}, \boldsymbol{\eta})
+ \mu_2 C_2(\qb, \tilde{\boldsymbol{\varsigma}}, \boldsymbol{\varsigma}) + \mu_3 C_3(\qa, \qb)]$ is the Lagrangian of \eqref{P:SE:equi:3}, $\mu_1, \mu_2, \mu_3 > 0$ are fixed weights, and $\lambda$ is the Lagrangian multiplier corresponding to constraints \eqref{C1}--\eqref{C4}. Here, $\widetilde{\mathcal{F}}\triangleq \mathcal{F}\setminus \{\eqref{C1}-\eqref{C4}\}$.

\begin{proposition}
\label{proposition-dual}
The values $C_{1,\lambda}$, $C_{2,\lambda}$, and $C_{3,\lambda}$ of $C_1$, $C_2$, and $C_3$ at the solution of \eqref{P:SE:equi:relax} corresponding to $\lambda$ converge to $0$ as $\lambda \rightarrow +\infty$. Also, problem \eqref{P:SE:equi:3} has strong duality, i.e.,
\begin{equation}\label{Strong:Dualitly:hold}
\underset{\widetilde{\qx}\in\mathcal{F}}{\min}\,\,
-\sum_{\ell\in\mathcal{K}_u} q_{\ul,\ell}   -  \sum_{k\in\mathcal{K}_d} q_{\dl,k}
=
\underset{\lambda\geq0}{\sup}\,\,
\underset{\widetilde{\qx}\in\widetilde{\mathcal{F}}}{\min}\,\,
\mathcal{L}(\widetilde{\qx}).
\end{equation}
Then, \eqref{P:SE:equi:3} is equivalent to \eqref{P:SE:equi:relax} at the optimal solution $\lambda^* \geq0$ of the sup-min problem in \eqref{Strong:Dualitly:hold}.
\end{proposition}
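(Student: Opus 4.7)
My plan is a classical exact-penalty and min-sup-swap argument, resting on the observation that on $\widetilde{\mathcal{F}}$ the three penalty functions are automatically nonnegative. Indeed, the retained constraints $\theta_{mk}^2 \leq \eta_{mk}$, $\varsigma_{m\ell}^2 \leq b_m \tilde{\varsigma}_{\ell}$, and $a_m, b_m \in [0,1]$ give $C_1, C_2, C_3 \geq 0$ pointwise on $\widetilde{\mathcal{F}}$, so $\mathcal{F} = \{\widetilde{\qx}\in\widetilde{\mathcal{F}} : C_1 = C_2 = C_3 = 0\}$ and $\mathcal{L}(\widetilde{\qx}) = f(\widetilde{\qx}) \triangleq -\sum_{\ell} q_{\ul,\ell} - \sum_k q_{\dl,k}$ whenever $\widetilde{\qx}\in\mathcal{F}$.

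For the first claim, pick any $\widetilde{\qx}^\circ \in \mathcal{F}$ (feasibility of \eqref{P:SE:equi:3} is implicitly assumed) and let $\widetilde{\qx}_\lambda$ minimize $\mathcal{L}$ over $\widetilde{\mathcal{F}}$. Optimality gives
\[
f(\widetilde{\qx}_\lambda) + \lambda \sum_{i=1}^{3} \mu_i C_{i,\lambda} \leq \mathcal{L}(\widetilde{\qx}^\circ) = f(\widetilde{\qx}^\circ).
\]
The per-AP power constraint \eqref{DL:power:cons}, the box $\tilde{\varsigma}_{\ell}\in[0,1]$ from \eqref{UL:power:cons}, and $a_m,b_m\in[0,1]$ from \eqref{abrelax} cap every primal variable, while the QoS lower bounds combined with the SE expressions sandwich $q_{\ul,\ell}$ and $q_{\dl,k}$. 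Thus $\widetilde{\mathcal{F}}$ is compact and $f$ is bounded below by some $f_{\min}$, yielding $0 \leq C_{i,\lambda} \leq (f(\widetilde{\qx}^\circ)-f_{\min})/(\mu_i\lambda) \to 0$ as $\lambda \to \infty$.

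For strong duality, weak duality is immediate from $\mathcal{L}(\widetilde{\qx}) = f(\widetilde{\qx})$ on $\mathcal{F}\subset\widetilde{\mathcal{F}}$. For the reverse inequality, extract a convergent subsequence $\widetilde{\qx}_{\lambda_n}\to\widetilde{\qx}^\infty$ using compactness; the vanishing $C_{i,\lambda_n}$ combined with continuity of the $C_i$ force $\widetilde{\qx}^\infty \in \mathcal{F}$, and continuity of $f$ then gives
\[
\sup_{\lambda\geq 0}\min_{\widetilde{\qx}\in\widetilde{\mathcal{F}}}\mathcal{L}(\widetilde{\qx}) \geq \liminf_n \mathcal{L}(\widetilde{\qx}_{\lambda_n}) \geq \liminf_n f(\widetilde{\qx}_{\lambda_n}) = f(\widetilde{\qx}^\infty) \geq \min_{\widetilde{\qx}\in\mathcal{F}} f(\widetilde{\qx}),
\]
closing the duality gap and establishing the equivalence at $\lambda^* \geq 0$.

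The hard part is really the bookkeeping behind ``$\widetilde{\mathcal{F}}$ is compact and $f$ is bounded below'': one must check that each auxiliary variable ($\varsigma_{m\ell}$, $\theta_{mk}$, $\tilde{\eta}_{imk}$, $q_{\ul,\ell}$, $q_{\dl,k}$) is pinned down purely by the constraints that survive in $\widetilde{\mathcal{F}}$, without invoking any of $C_1$--$C_3$. Once that is in place, the penalty convergence and the min-sup swap are textbook; one also implicitly needs Slater-type feasibility of \eqref{P:SE:equi:3} so that the reference point $\widetilde{\qx}^\circ$ exists.
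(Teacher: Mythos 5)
Your proof is correct and is essentially the standard exact-penalty argument that the paper itself invokes: the paper omits the proof entirely, deferring to \cite{vu18TCOM}, and that reference proceeds exactly as you do --- observe that the retained constraints force $C_1,C_2,C_3\geq 0$ on $\widetilde{\mathcal{F}}$ so that $\mathcal{F}=\{\widetilde{\qx}\in\widetilde{\mathcal{F}}:C_i=0\}$, bound $\lambda\mu_i C_{i,\lambda}$ by comparison with a feasible point to get $C_{i,\lambda}\to 0$, and close the duality gap by a compactness/continuity limit. The only point treated no more rigorously than in the source is the attainment of the supremum at a finite $\lambda^*$ (the argument only shows the sup is approached as $\lambda\to+\infty$), but the paper's own discussion of choosing a finite $\lambda$ with tolerance $\varepsilon$ makes clear this is understood in the same limiting sense.
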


\begin{proof}
    The proof follows \cite{vu18TCOM}, and hence, omitted due to lack of space.
\end{proof}
\vspace{-5mm}
Note that it is theoretically required to have $C_{1,\lambda}=0$, $C_{2,\lambda}=0$, and $C_{3,\lambda}=0$ in order to obtain the optimal solution to problem \eqref{P:SE:equi:3}. According to Proposition~\ref{proposition-dual}, $C_{1,\lambda}$, $C_{2,\lambda}$, and $C_{3,\lambda}$ converge to $0$ as $\lambda\to+\infty$. For practical implementation, it is sufficient to accept $C_{j,\lambda}\leq\varepsilon, \forall j\in\{1,\dots,3\}$, for some small $\varepsilon$ with a sufficiently large value of $\lambda$.
In our numerical experiments, for $\varepsilon = 10^{-3}$, we see that $\lambda=1$ with $ \mu_1=\mu_2=0.1$, and $\mu_3=100,$ is enough to ensure that $C_{j,\lambda}\leq\varepsilon, \forall j\in\{1,\dots,3\}$. This way of selecting $\lambda$ has been widely used in the literature, e.g., see \cite{vu18TCOM} and references therein.

Problem \eqref{P:SE:equi:relax} is still difficult to solve due to non-convex constraints \eqref{UL:QoS:cons:1} and \eqref{DL:QoS:cons:1}. To deal with constraint \eqref{UL:QoS:cons:1}, we observe that the concave lower bound $\widetilde{\mathcal{S}}_{\ul,\ell} (\qb, \boldsymbol \varsigma,  \tilde{\boldsymbol{\eta}})$ of $\widehat{\mathcal{S}}_{\ul,\ell} (\qb, \boldsymbol \varsigma,  \tilde{\boldsymbol{\eta}})$ is given by \cite[Eq. (40)]{vu20TWC}
\vspace{-0.0em}
\begin{align}
    &\widetilde{\mathcal{S}}_{\ul,\ell}  (\qb, \boldsymbol \varsigma,  \tilde{\boldsymbol{\eta}}) \triangleq
    \left(\frac{\tau_c - \tau_{t}}{\tau_c\log 2}\right)
    \bigg[
          \log
             \bigg(1+\frac{(\Psi_{\ell}^{(n)})^2}{\Phi_{\ell}^{(n)}}\bigg) -\nonumber\\
    &\hspace{2em}
            \frac{(\Psi_{\ell}^{(n)})^2}{\Phi_{\ell}^{(n)}}
           		 + 2\frac{\Psi_{\ell}^{(n)}\Psi_{\ell}}{\Phi_{\ell}^{(n)}}
           		 - \frac{(\Psi_{\ell}^{(n)})^2(\Psi_{\ell}^2
           		 + \Phi_{\ell})}{\Phi_{\ell}^{(n)}((\Psi_{\ell}^{(n)})^2
           		 +\Phi_{\ell}^{(n)})}
     \Bigg].
\end{align}
Then, constraint \eqref{UL:QoS:cons:1}  is approximated by the following convex constraint
\vspace{-0.3em}
\begin{align}
    \label{UL:QoS:cons:approx}
    \widetilde{\mathcal{S}}_{\ul,\ell} (\qb, \boldsymbol \varsigma,  \tilde{\boldsymbol{\eta}}) \geq q_{\ul,\ell}, \forall \ell.
\end{align}
Now, to deal with constraint \eqref{DL:QoS:cons:1}, we see that $\mathcal{S}_{\dl,k}  (\qa, \boldsymbol \theta, \tilde{\boldsymbol{\varsigma}})$ has a concave lower bound $\widetilde{\mathcal{S}}_{\dl,k}  (\qa, \boldsymbol \theta, \tilde{\boldsymbol{\varsigma}})$ that is given as \cite[Eq. (40)]{vu20TWC}
\vspace{-0.5em}
\begin{align}
    &\widetilde{\mathcal{S}}_{\dl,k}  (\boldsymbol \theta, \tilde{\boldsymbol{\varsigma}})
     \triangleq
     \left(\frac{\tau_c - \tau_{t}}{\tau_c\log 2} \right)
     \Bigg[
      \log\bigg(1+
                  \frac{(\Xi_k^{(n)})^2}{\Omega_k^{(n)}}\bigg)
                  -\nonumber\\
                  &\hspace{3em}
                  \frac{(\Xi_k^{(n)})^2}{\Omega_k^{(n)}}
                  + 2\frac{\Xi_k^{(n)}\Xi_k}{\Omega_k^{(n)}}
                   - \frac{(\Xi_k^{(n)})^2(\Xi_k^2 + \Omega_k)}{\Omega_k^{(n)}((\Xi_k^{(n)})^2
                   	+\Omega_k^{(n)})}
       \Bigg],
\end{align}
where $\Xi_k (\boldsymbol \theta) \triangleq \Ntx \sqrt{\rho_{d}}
                           \sum_{{m\in\mathcal{M}}}
                            \theta_{mk} \gamdmk$, $\Omega_k (\boldsymbol \theta, \tilde{\boldsymbol{\varsigma}}) \triangleq \rho_{d}\Ntx
    \sum_{k'\in\mathcal{K}_{d}}\sum_{m\in\mathcal{M}}
    \theta_{mk'}^2 \betamkd\gamdmkp
+
\rho_u\sum_{\ell\in\mathcal{K}_u}  \tilde{\vsl}\betakldu + 1$.
Then, constraint \eqref{DL:QoS:cons:1}  is approximated by the following convex constraint
\vspace{-0.3em}
\begin{align}
    \label{DL:QoS:cons:approx}
    \widetilde{\mathcal{S}}_{\dl,k} (\boldsymbol \theta,  \tilde{\boldsymbol{\varsigma}}) \geq q_{\dl,k}, \forall k.
\end{align}

We also see that $xy \leq 0.25 [(x+y)^2-2(x^{(n)}-y^{(n)})(x-y) + (x^{(n)}-y^{(n)})^2]$ and $-xy \leq 0.25 [(x\!-\!y)^2\!-\!2(x^{(n)}\!+\!y^{(n)})(x\!+\!y)
+ (x^{(n)}+y^{(n)})^2], \forall x\geq0, y\geq0$ \cite{vu20TWC}. Therefore, the convex upper bounds of $C_1 (\boldsymbol{\theta}, \boldsymbol{\eta})$, $C_2(\qb, \tilde{\boldsymbol{\varsigma}}, \boldsymbol{\varsigma})$, and $C_3(\qa, \qb)$ are respectively given by
\vspace{-0.1em}
\begin{align}
    \nonumber
    &\widetilde{C}_1 (\boldsymbol{\theta}, \boldsymbol{\eta}) = \sum_{m\in\mathcal{M}} \sum_{k\in\mathcal{K}_d} \left[\eta_{mk} - 2\theta_{mk}^{(n)}\theta_{mk} + (\theta_{mk}^{(n)})^2\right]
    \\
    \nonumber
    &\widetilde{C}_2(\qb, \tilde{\boldsymbol{\varsigma}}) \triangleq \sum_{m\in\mathcal{M}} \sum_{\ell \in \mathcal{K}_u} \bigg(0.25\big[(b_m + \tilde{\varsigma}_{\ell})^2- 2(b_m^{(n)} - \tilde{\varsigma}_{\ell}^{(n)})
    \\
    \nonumber
    &\hspace{4em}
    \times(b_m \!-\! \tilde{\varsigma}_{\ell}) \!+\! (b_m^{(n)} \!- \tilde{\varsigma}_{\ell}^{(n)})^2\big]
     \!-\! 2\varsigma_{m\ell}^{(n)}\varsigma_{m\ell}\! +\! (\varsigma_{m\ell}^{(n)})^2 \bigg)
    \\
    \nonumber
    & \widetilde{C}_3(\qa, \qb) \triangleq \sum_{m\in\mathcal{M}} \left[a_{m}-2a_{m}^{(n)}a_{m} + (a_{m}^{(n)})^2\right] \nonumber\\
    &\hspace{6.0em}+ \sum_{m\in \mathcal{M}}\left[b_{m}-2b_{m}^{(n)}b_{m} + (b_{m}^{(n)})^2\right].
\end{align}
Similarly, \eqref{tildeetaimk} and \eqref{varsigmaml:1} can be approximated by the following convex constraints
\vspace{-0.3em}
\begin{align}
\label{tildeetaimk:1:convex}
    \nonumber
    & 0.25 [(b_m+\eta_{ik})^2 -2(b_m^{(n)}-\eta_{ik}^{(n)})(b_m-\eta_{ik})
    \\
    &\hspace{4em}
    + (b_m^{(n)}-\eta_{ik}^{(n)})^2] - \tilde{\eta}_{imk} \leq 0, \forall m, k 
    \\
    \label{varsigmaml:1:convex}
    \nonumber
    & \varsigma_{m\ell}^2 + 0.25 [(b_m-\tilde{\varsigma}_{\ell})^2 -2(b_m^{(n)}+\tilde{\varsigma}_{\ell}^{(n)})(b_m+\tilde{\varsigma}_{\ell})
    \\
    &\hspace{4em}
    + (b_m^{(n)}+\tilde{\varsigma}_{\ell}^{(n)})^2] \leq 0, \forall m,\ell.
\end{align}

At iteration $(n+1)$, for a given point $\widetilde{\qx}^{(n)}$, problem \eqref{P:SE:equi:relax} can finally be approximated by the following convex problem:
\begin{align}
\label{P:SE:equi:relax:approx}
\underset{\widetilde{\qx}\in\widehat{\mathcal{F}}}{\min} \,\,
& \widehat{\mathcal{L}}(\widetilde{\qx}),
\end{align}
where $\widehat{\mathcal{L}}(\widetilde{\qx}) \!=\! \!-\! \sum_{\ell\in\mathcal{K}_u} q_{\ul,\ell}   \!-\!  \sum_{k\in\mathcal{K}_d} q_{\dl,k} + \lambda[\mu_1 \tilde{C}_1 (\boldsymbol{\theta}, \boldsymbol{\eta}) 
+ \mu_2 \tilde{C}_2(\qb, \tilde{\boldsymbol{\varsigma}}, \boldsymbol{\varsigma}) + \mu_3 \tilde{C}_3(\qa, \qb)]$, $\widehat{\mathcal{F}} \triangleq \{\widetilde{\mathcal{F}}, \eqref{UL:QoS:cons:approx}, \eqref{DL:QoS:cons:approx}, \eqref{tildeetaimk:1:convex}, \eqref{varsigmaml:1:convex}\} \setminus \{\eqref{tildeetaimk}, \eqref{UL:QoS:cons:1}, \eqref{DL:QoS:cons:1}, \eqref{varsigmaml:1}\}$ is a convex feasible set. In Algorithm~\ref{alg}, we outline the main steps to solve problem \eqref{P:SE:equi:3}.
Starting from a random point $\widetilde{\qx}\in\widehat{\mathcal{F}}$, we solve \eqref{P:SE:equi:relax:approx} to obtain its optimal solution $\widetilde{\qx}^*$, and use $\widetilde{\qx}^*$ as an initial point in the next iteration. The algorithm terminates when an accuracy level of $\varepsilon$ is reached. Algorithm~\ref{alg} will converge to a stationary point, i.e., a Fritz John solution, of problem \eqref{P:SE:equi:relax} (hence \eqref{P:SE:equi:3} or \eqref{P:SE}). The proof of this fact is rather standard, and follows from \cite[Proposition 2]{vu18TCOM}.

\begin{algorithm}[!t]
\caption{Solving problem \eqref{P:SE:equi:relax}}
\begin{algorithmic}[1]
\label{alg}
\STATE \textbf{Initialize}: Set $n\!=\!0$ and choose a random point $\widetilde{\qx}^{(0)}\!\in\!\widehat{\mathcal{F}}$.
\REPEAT
\STATE Update $n=n+1$
\STATE Solve \eqref{P:SE:equi:relax:approx} to obtain its optimal solution $\widetilde{\qx}^*$
\STATE Update $\widetilde{\qx}^{(n)}=\widetilde{\qx}^*$
\UNTIL{convergence}
\end{algorithmic}
\end{algorithm}

\vspace{-2mm}
\section{Numerical Results and Discussions}~\label{Sec:Numer}
\vspace{-0mm}
We consider a cell-free massive MIMO network, where APs and UEs are randomly distributed in a square of $1 \times 1$ km${}^2$, whose edges are wrapped around to avoid the boundary effects. The distances between adjacent APs are at least $50$ m. We set $N=2$, $\mathcal{S}_\dl^o=\mathcal{S}_\ul^o=0.2$ bit/s/Hz, $K_d=K_u =5$, $\tau_c=200$,  $\tau_t=K_d+K_u$ and $\varepsilon=10^{-3}$.  We further set the bandwidth $B=20$ MHz and noise figure $F = 9$ dB. Thus, the noise power $\Sn=k_B T_0 B F=-92$ dBm, where $k_B=1.381\times 10^{-23}$ Jules/${}^o$K is the Boltzmann constant, while $T_0=290^o$K is the noise temperature. Let $\tilde{\rho}_d = 1$ W, $\tilde{\rho}_u = 0.2$~W and $\tilde{\rho}_t = 0.2$~W be the maximum transmit power of the APs, UL UEs and UL training pilot sequences, respectively. The normalized maximum transmit powers ${\rho}_d$, ${\rho}_u$, and ${\rho}_t$ are calculated by dividing these powers by the noise power. We model the large-scale fading coefficients as $\beta_{mk} = 10^{\frac{\text{PL}_{mk}^d}{10}}10^{\frac{F_{mk}}{10}}$ 
where $10^{\frac{\text{PL}_{mk}^d}{10}}$ represents the path loss, and $10^{\frac{F_{mk}}{10}}$ represents the shadowing effect with $F_{mk}\in\mathcal{N}(0,4^2)$ (in dB)~\cite{emil20TWC}.  Here, $\text{PL}_{mk}^d$ (in dB) is given by  \cite{emil20TWC}
\vspace{-0.1em}
\begin{align}\label{PL:model}
\text{PL}_{mk}^d = -30.5-36.7\log_{10}\bigg(\frac{d_{mk}}{1\,\text{m}}\bigg),
\end{align}
and the correlation among the shadowing terms from the AP $m, \forall m\in\mathcal{M}$ to different UEs $k,\ell\in\mathcal{K}_d$ ($\mathcal{K}_u$) is expressed as
\vspace{-0em}
\begin{align}\label{corr:shadowing}
\mathbb{E}\{F_{mk}F_{j\ell}\} \triangleq
\begin{cases}
  4^22^{-\delta_{k\ell}/9\,\text{m}},& \text{if $j=m$}\\
  0, & \mbox{otherwise},
\end{cases}, \forall j\in\mathcal{M},
\end{align}
where $\delta_{k\ell}$ is the physical distance between UEs $k$ and $\ell$.

\begin{figure}[t]
	\centering
	\vspace{0em}
	\includegraphics[width=0.46\textwidth]{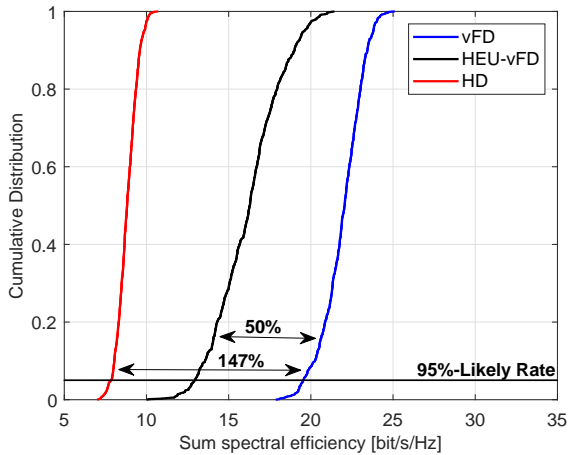}
	\vspace{-0.5em}
	\caption{ The cumulative distribution of the sum spectral efficiency ($M=30$).}
	\vspace{1.5em}
	\label{fig:Fig1}
\end{figure}
To evaluate the effectiveness of our proposed scheme (\textbf{vFD}), we consider the following baseline schemes:
\begin{itemize}
		\item \textbf{Heuristic vFD (HEU-vFD):} We apply our algorithm to optimize the per-AP power control coefficients, $\eta_{mk}$, and per-UL UE power control coefficients, $\tilde{\varsigma}_{\ell}$,  for given random mode assignment vectors $\qa$ and $\qb$, under the same SE requirement constraints for UL and DL UEs.
		\item \textbf{HD:} Data transmission phase is divided into two equal time fractions of length $(\tau_c-\tau_t)/2$ each for UL and DL data transmission. All APs serve UL and DL UEs during corresponding time fraction, with optimized $\eta_{mk}$ and $\tilde{\varsigma}_{\ell}$.
\end{itemize}

We note that \cite{chowdhury2021can} does not take into account the per-UE SE requirements. As such, the proposed greedy algorithm \cite{chowdhury2021can} needs to be modified with the per-UE SE requirements in order to have a fair comparison with our vFD approach. However, modifying this algorithm is not straightforward. Therefore, we leave a comprehensive comparison between our vFD approach and the proposed greedy algorithm in \cite{chowdhury2021can} for our future work.

Fig.~\ref{fig:Fig1} compares the cumulative distribution of the sum SE for the vFD, HEU-vFD, and HD cell-free massive MIMO systems, and for $M=30$. It is observed that the $95\%$-likely sum SE ($5\%$-outage SE) of the vFD system is about $19.47$ bits/s/Hz which is respectively  $2.5$ and $1.5$ fold higher than that of the HD system (about $7.8$ bits/s/Hz) and HEU-vFD system (about $13$ bits/s/Hz). An interesting observation here is that, the sum SE achieved by the proposed vFD  is more than $2$ times of that of the HD counterpart, i.e. more than the promised potential double gain offered by traditional FD over the HD. This can be explained by the fact that, with the proposed vFD, the APs do not suffer form the SI, and cross-link interference are efficiently managed by the AP mode assignment process.

In Fig.~\ref{fig:Fig2}, we evaluate the impact of increasing the number of AP on the performance of the proposed vFD approach. It can be observed that by increasing $M$ from $30$ to $60$, the performance of HD and HEU-vFD is slightly improved, while the gain achieved by the vFD over the
the two benchmark schemes is remarkably increased. More specifically, the gain of vFD over the HEU-vFD and HD is improved from $147\%$ to $212\%$ and $50\%$ to $80\%$, respectively.

\vspace{-1mm}
\section{Conclusion}~\label{Sec:conc}
\vspace{-0mm}
We proposed a vFD approach in cell-free massive MIMO networks, where the in-band FD is virtually realized by using the existing HD APs without the cost of hardware for SI suppression. A mixed-integer optimization problem was formulated to maximize the sum SE, with per-UE SE requirement constraints and per-UL UE and per-AP power control constraints. An iterative algorithm was proposed to solve the complicated non-convex problem. The proposed vFD approach provides significant improvement over the HD approach as well as the heuristic approach.

\begin{figure}[t]
	\centering
	\vspace{0em}
	\includegraphics[width=0.46\textwidth]{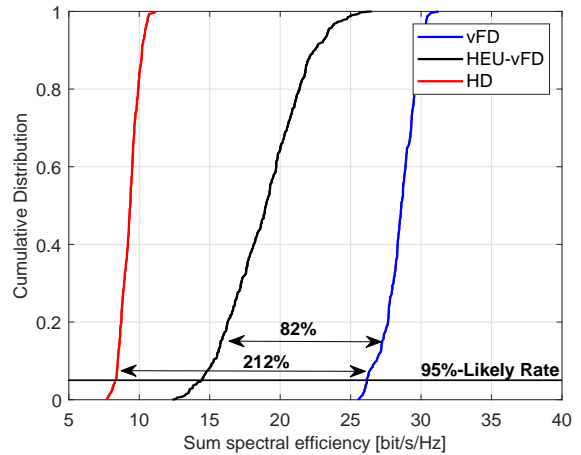}
	\vspace{-0.5em}
	\caption{ The cumulative distribution of the sum spectral efficiency ($M=60$).}
	\vspace{0.5em}
	\label{fig:Fig2}
\end{figure}

\vspace{-1mm}
\section{Acknowledgment}
This work of M. Mohammadi and M. Matthaiou was supported by a research grant from the Department for the Economy Northern Ireland under the US-Ireland R\&D Partnership Programme and by the European Research Council (ERC) under the European Union's Horizon 2020 research and innovation programme (grant agreement No. 101001331). The work of T. T. Vu and H. Q. Ngo was supported by the U.K. Research and Innovation Future Leaders Fellowships under Grant MR/S017666/1.
\vspace{-1mm}
\bibliographystyle{IEEEtran}
\bibliography{IEEEabrv,references}

\end{document}